\newcommand{\outcome}[1]{J(#1)}
\newcommand{\graph}[1]{{G^{#1}}}
\newcommand{\graphset}{\mathcal{G}}
\newcommand{\finaltime}{{T_f}}
\newcommand{\nodeset}{\mathcal{V}}
\newcommand{\edgeset}{\mathcal{E}}
\newcommand{\weightmatrix}[1]{W^{#1}}
\newcommand{\weightmatrixset}{\mathcal{W}}
\newcommand{\blueposition}[1]{p_{#1}}
\newcommand{\ammo}[1]{\alpha(#1)}
\newcommand{\maxammo}{\overline{\alpha}}
\newcommand{\neighbors}[2]{{\overrightarrow{\mathcal{N}}^\text{#1}_{#2}}}
\newcommand{\markovstate}{S}
\newcommand{\markovstateset}{\mathcal{S}}
\newcommand{\numbernodes}{N}
\newcommand{\numbergraphs}{K}
\newcommand{\numberagents}{M}
\newcommand{\stagecost}[1]{C(#1)}
\newcommand{\policyred}{\psi_\text{red}}
\newcommand{\policyblue}{\psi_\text{blue}}
\newcommand{\mixedpolicyblue}{\pi_\text{blue}}
\newcommand{\mixedpolicyred}{\pi_\text{red}}
\newcommand{\securitypolicyblue}{\hat{\pi}_\text{blue}}
\newcommand{\securitypolicyred}{\hat{\pi}_\text{red}}
\newcommand{\actionsetblue}{A_\text{blue}}
\newcommand{\actionsetred}{A_\text{red}}
\newcommand{\transitionprob}{P}
\newcommand{\expectation}[2]{\mathbb{E}_{ #2 }\left[ #1 \right]}
\newcommand{\upperoutcome}[1]{\overline{J}\left( #1 \right)}
\newcommand{\loweroutcome}[1]{\underline{J}\left( #1 \right)}
\newcommand{\valueperagent}[1]{\Tilde{V}_{#1}}
\newcommand{\redGraph}{\mathcal{G}^\text{red}}
\newcommand{\redNodes}{\mathcal{V}^\text{red}}
\newcommand{\redEdges}{\mathcal{E}^\text{red}}
\newcommand{\jointgraph}[1]{{\mathcal{G}'^{#1}}}
\newcommand{\jointvertexset}{\mathcal{V}'}
\newcommand{\jointedges}{\mathcal{E}'}
\newcommand{\jointweightmatrix}[1]{\mathcal{W}'^{#1}}
\newcommand{\jointweightmatrixset}{\overline{\mathcal{W}}'}
\newcommand{\jointgraphset}{\overline{\mathcal{G}}'}
\newcommand{\jointneighbors}[2]{{\mathcal{N}'^{\text{#1}}_{#2}}}
\newcommand{\iter}{\tau}
\newcommand{\qfunction}{Q}
\newcommand{\qfunctionapprox}[1]{\hat{Q}^{#1}}
\newcommand{\qmatrix}{\mathcal{Q}}
\newcommand{\worstcasegraph}{\overline{G}}
\newcommand{\worstcaseweights}{\overline{W}}
\newcommand{\naturalSet}[1]{\left[#1\right]}
\newtheorem{theorem}{Theorem}
\title{
Multi-Robot Coordination Induced in an Adversarial Graph-Traversal Game
}
\author{James Berneburg, Xuan Wang, Xuesu Xiao, and Daigo Shishika
\thanks{The authors are with the Department of Mechanical Engineering, George Mason University, Fairfax, VA 22030, USA,  {\tt\small\{jbernebu,xwang64,xiao,dshishik\}}@gmu.edu}
}
\date{August 2023}
\begin{document}

\maketitle

\begin{abstract}
This paper presents a game theoretic formulation of a graph traversal problem, with applications to robots moving through hazardous environments in the presence of an adversary, as in military and security scenarios. The blue team of robots moves in an environment modeled by a time-varying graph, attempting to reach some goal with minimum cost, while the red team controls how the graph changes to maximize the cost. The problem is formulated as a stochastic game, so that Nash equilibrium strategies can be computed numerically. Bounds are provided for the game value, with a guarantee that it solves the original problem. Numerical simulations demonstrate the results and the effectiveness of this method, particularly showing the benefit of mixing actions for both players, as well as beneficial coordinated behavior, where blue robots split up and/or synchronize to traverse risky edges. 


\end{abstract}

\section{Introduction}


Consider a scenario where multiple robots must traverse difficult terrain to reach a goal in the presence of an adversary, as shown in Figure~\ref{fig:motivatingGraph}. The robots 
must plan a path through the environment to minimize risk, while the adversary affects the risk by altering the environment, such as by destroying a bridge or by positioning its own robots, in military or security scenarios. 

Such a hazardous environment is often modeled as a graph, 
which indicates both the different paths robots may take to reach the goal and the risk of those paths through edge weights. 
For example, probabilistic roadmap planners construct a graph from the environment which is then planned over~\cite{gasparetto2015path,hsu2002randomized}. 
For a changing environment, one may consider finding the shortest (i. e. minimal risk) path through time-varying graphs, such as in~\cite{ding2008finding,cai1997time,yuan2019constrained} which find 
shortest time paths when edges have time-varying delays. 
The work in~\cite{dimmig2023multi} specifically considers planning for a team of robots on a graph, formulating it as an optimization problem 
where robots can support each other. 

Another methodology to handle uncertainty in robotics problems is to model them as Markov decision processes (MDPs)~\cite{puterman1990markov}, assuming a transition function is known. In particular, partially observable MDPs~\cite{kurniawati2022partially} and decentralized MDPs~\cite{bernstein2002complexity,matignon2012coordinated} have been applied to robotics. 
Some works consider risk as simply a cost to minimize, such as~\cite{pereira2013risk}, which considers path planning of autonomous underwater vehicles, and~\cite{primatesta2019risk}, which considers path planning of unmanned aerial vehicles. Other works consider that the risk is constrained while another cost is minimized, like~\cite{feyzabadi2014risk}, which considers robotic path planning in a stochastic environment.
However, while the preceding solutions can account for changes in the environment, including unpredictable ones, they take a one-sided optimization approach, and so they are not suited for adversarial scenarios. 
The work~\cite{aoude2010threat} includes an estimator of the intentions of other, potentially adversarial vehicles but still approaches it as a one-sided path planning problem. 
Therefore, we turn to game theory for results on adversarial interactions.





\begin{figure}
\centering
\includegraphics[width=0.9\linewidth]{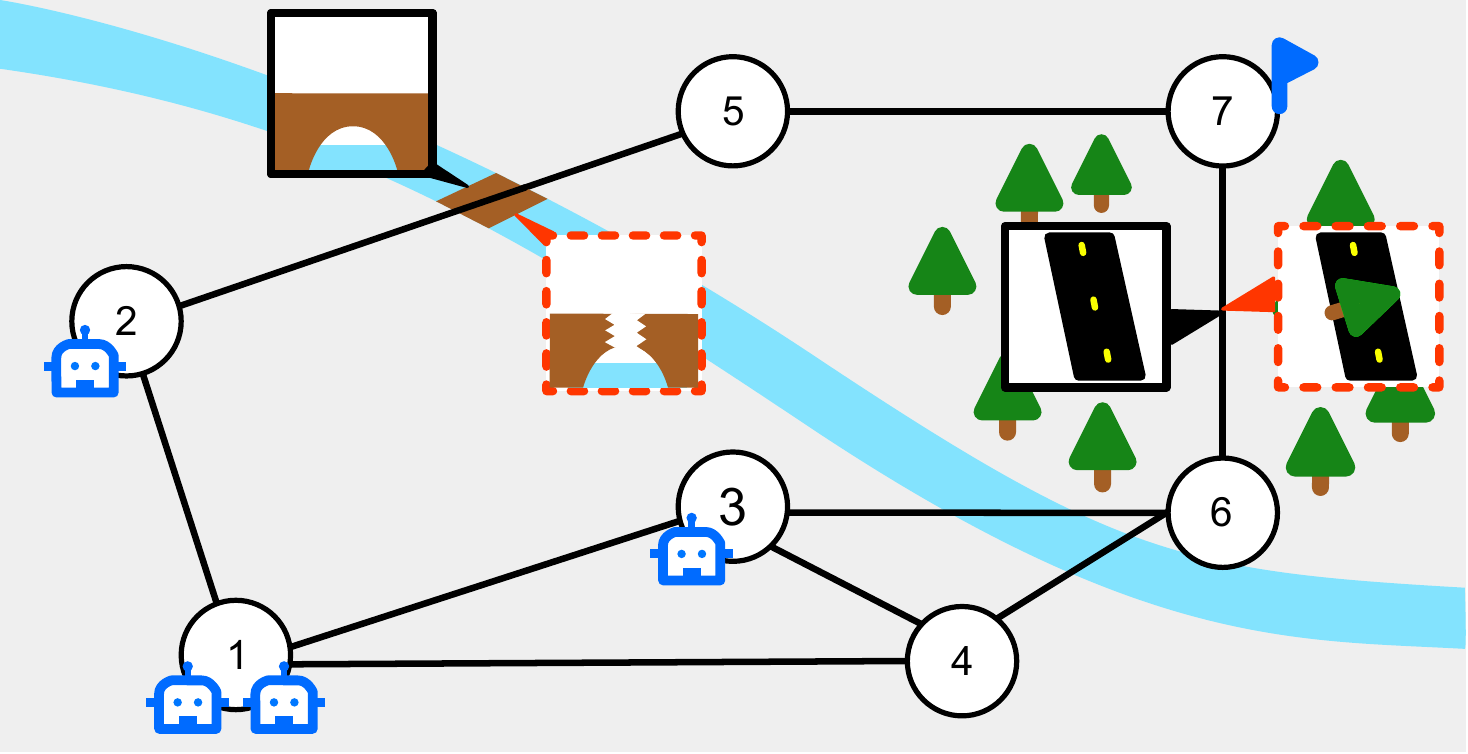}
\caption{An example scenario motivating our problem. The team of robots intend to reach their destination node 7 with minimal risk, 
while an adversary controls the condition of the terrain, with the 
capability to destroy the bridge or down trees on the road. 
} \label{fig:motivatingGraph}
\vspace{-10pt}
\end{figure}

Game theoretic works for robotic path planning include~\cite{emery2005game,wang2020game}. In~\cite{wang2020game}, multiple robots with different objectives interact in a continuous environment and attempt to minimize risk, modeled with higher-order moments of the cost. In~\cite{emery2005game}, the robots are explicitly cooperative but have differing information on a partially observable Markov decision process. 
Closer to our attack/defense scenario, attack graph problems frequently 
model cyber-security
as games on graphs, where the graph represents the possible avenues of attack, which the defenders can influence~\cite{durkota2015optimal, nguyen2017multi}. 
%
For example,~\cite{durkota2015optimal} considers the defender 
modifying the graph nodes to be better defended in a Stackelberg game formulation. Closer to our problem,~\cite{nguyen2017multi} considers feedback heuristic strategies for both players 
to approximate the Nash equilibrium (NE). 
However, in this formulation, the attacker gains permanent control over 
nodes, rather than temporarily occupying them as in our problem.  

Another network security problem over graphs is considered in~\cite{nguyen2009stochastic}, 
where an intrusion detection problem is formulated as a stochastic game~\cite{filar2012competitive}. 
A particular type of game more closely related to our problem is barrier coverage, which considers how a defender may place sensors to detect intruding robots, which attempt to avoid detection~\cite{kloder2008partial,shishika2020game}. 
However, these are not directly applicable to our problem, because they cannot consider real-time feedback in the defender's actions. 
Therefore, this paper's contributions are as follows. 
We formulate the adversarial graph traversal problem as a novel stochastic game, which allows for numerical computation of a mixed NE. We provide theoretical results by bounding the game value with security strategies for both players, and we guarantee that the blue player reaches the goal almost surely under its NE policy. Finally, we demonstrate 
the theoretical results and show the advantage of mixed strategies and coordinated behavior for the robots 
in numerical examples. 

\section{Problem Formulation}





We consider a two-player zero-sum game where the blue player moves its robots through a weighted digraph $\graph{}(t)$ 
which is controlled by the red player. 
Defining $\naturalSet{n} \triangleq \{1,2,\dots n\}$ for $n \in \mathbb{Z}_{>0}$, let $\graph{}(t) = ( \nodeset,\edgeset,\weightmatrix{}(t) )$, where $\nodeset = \naturalSet{N}$ is the set of nodes for $\numbernodes \in \mathbb{Z}_{>0}$, $\edgeset \subset \nodeset \times \nodeset$ is the set of edges, and $\weightmatrix{}(t) \in \mathbb{R}^{\numbernodes \times \numbernodes}$ is the time-varying weighted adjacency matrix,  
where $\weightmatrix{}_{ij}(t) \geq 0$ if $(i,j) \in \edgeset$ and $\weightmatrix{}_{ij}(t) = 0$ otherwise, for $i,j \in \nodeset$. 
This indicates that only the edge weights are time-varying. 
The graph belongs to a known set of graphs, $\graph{}(t) \in \graphset \triangleq \{\graph{1}, \graph{2},\dots,\graph{\numbergraphs} \}$ and, similarly, $\weightmatrix{}(t) \in \weightmatrixset \triangleq\{\weightmatrix{1},\weightmatrix{2},\dots,\weightmatrix{\numbergraphs}\}$, so that $\graph{k} = (\nodeset,\edgeset,\weightmatrix{k})$ for $k \in \naturalSet{K}$, where $\numbergraphs \in \mathbb{Z}_{>0}$ is the number of graphs. 


There are $\numberagents \in \mathbb{Z}_{>0}$ blue robots and each robot $m$ at time $t$ has position $\blueposition{m}(t) \in \nodeset$. 
We collect all robot positions into the vector $\blueposition{}(t) = [p_1,p_2,\dots,p_M]^T \in \nodeset^\numberagents$. 
The set of valid actions for the blue player is $\actionsetblue(\markovstate) = \neighbors{out}{p_1} \times \neighbors{out}{p_2} \times \dots \times \neighbors{out}{p_\numberagents}$, where $\markovstate$ is the game state defined later and $\neighbors{out}{p}$ is the set of out neighbors of node $p$. 
Without loss of generality, node $\numbernodes$ is the goal node, and the game ends when all blue robots reach the goal node ($\blueposition{}(t) = \mathbf{1}_\numberagents \numbernodes$). 
We assume that the goal node $\numbernodes$ is reachable from every other node in the graph and that it has a self-loop with an edge weight of $0$, so that the blue robots can reach the goal from any initial conditions and that any robots which do so can remain there without incurring any additional costs. 
At each time step, based on its action $a_\text{blue}(t) = p(t+1)$, the blue player incurs a stage cost of 
\begin{align}
    C\left(\markovstate,a_\text{blue}(t)\right) = \sum_{m=1}^\numberagents \weightmatrix{}_{\blueposition{m}\blueposition{m}^+}(t) ,
\end{align}
where $\blueposition{m}^+$ is robot $m$'s position at time $t+1$. 

\begin{figure}
\centering
    \includegraphics[width=0.7\linewidth]{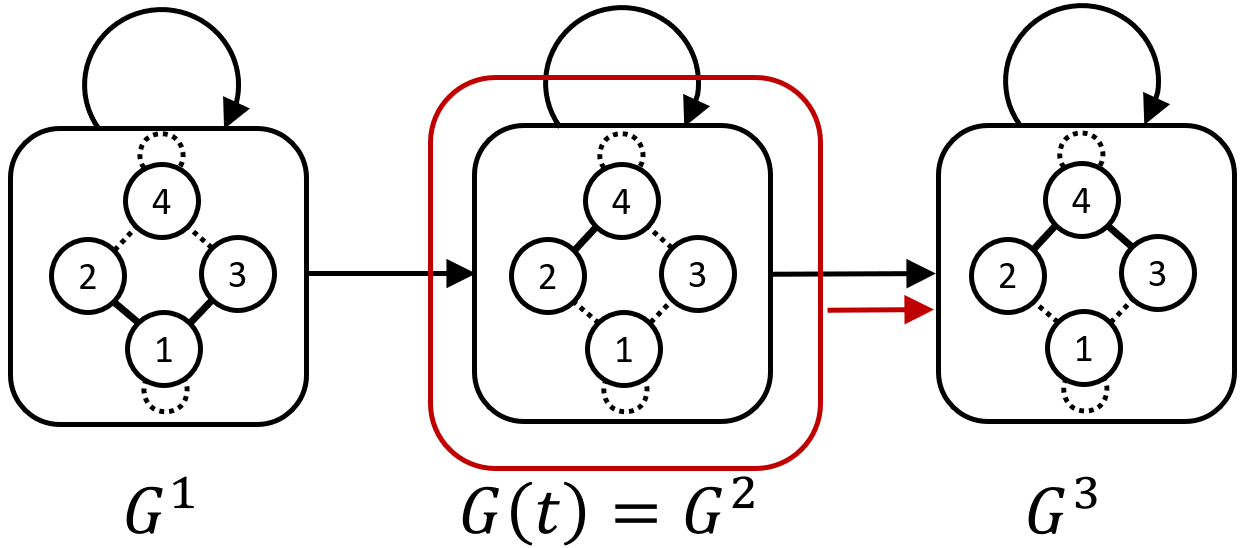}
\caption{This shows an example of red's action graph, where each square node corresponds to a position graph for the blue player, with different edge weights, shown inside. The outlined node indicates $\graph{}(t)$ and the arrow indicates the action of the red player to select $\graph{}(t+1)=\graph{3}$. } \label{fig:redGraph}
\vspace{-10pt}
\end{figure}

The red player's action is to select the graph at the next time step. 
Specifically, its available actions at time $t$ 
are specified by the red player's action graph $\redGraph = (\redNodes,\redEdges)$, where $\redNodes = \naturalSet{K}$ are the nodes of this graph where node $k$ corresponds to the position graph $\graph{k}$, and $\redEdges \subset \redNodes \times \redNodes$ is the set of edges. 
The action graph is unweighted, directed, and weakly connected, and every node has a self loop. 
The red player's action is then 
$a_\text{red}(t) \in \neighbors{red}{k}$, where $\neighbors{red}{k}$ is the set of out neighbors of $k$ in $\redGraph$, and $\graph{k} = \graph{}(t)$. 
Therefore, its available actions are determined by the current graph, to model cases where multiple steps may be required to change the environment and changes may be irreversible. See Figure~\ref{fig:redGraph} for a representation of $\redGraph$. 
Additionally, the red player has a limited amount of ammo $\ammo{t}$, with $\ammo{0} = \maxammo \in \mathbb{Z}_{\geq 0}$, which is consumed to change the graph. 
The red player's set of valid actions is  
\begin{align}
    \actionsetred\big((\blueposition{},\graph{k},\alpha)\big) = \left\{ \begin{matrix}
        \neighbors{red}{k}, & \text{if } \ammo{t} > 0\\
        k , & \text{if } \ammo{t} = 0
    \end{matrix} \right. .
\end{align}

Now, we can write the entire game state, containing the positions of the blue robots, the current graph, and the red player's ammo, as $\markovstate(t) = (\blueposition{}(t),\mathcal{G}(t),\ammo{t}) \in \markovstateset \triangleq \nodeset^\numberagents \times \graphset \times \naturalSet{\maxammo}$. 
The game dynamics is 
\begin{align}\label{eq:gameDynamics}
    \blueposition{}(t+1) &= a_\text{blue}(t) \in \actionsetblue(\markovstate)
    \notag\\
    \graph{}(t+1) &= \graph{a_\text{red}(t)}, a_\text{red}(t) \in \actionsetred(\markovstate)\notag\\
    \ammo{t+1} &= \left\{ \begin{matrix}
        \ammo{t}-1, & \text{for } \graph{a_\text{red}(t)} \neq \graph{}(t)\\
        \ammo{t}, & \text{for } \graph{a_\text{red}(t)} = \graph{}(t)
    \end{matrix} \right. .
\end{align}

To facilitate describing mixed strategies, we formulate a stochastic game, where the state dynamics is a Markov chain and the transition probabilities are influenced by the actions of both players~\cite{filar2012competitive}. The Markov chain is $(\markovstateset,\actionsetblue,\actionsetred,\transitionprob)$  
and dynamics is deterministic, such that the transition probabilities $\transitionprob : \markovstateset \times \markovstateset \times \nodeset^\numberagents \times \naturalSet{\numbergraphs} \rightarrow \{0,1\}$ are in accordance with~\eqref{eq:gameDynamics}. 
%
Mixed policies for the blue player and the red player are defined as mappings $\mixedpolicyblue : \markovstateset \rightarrow \Delta \actionsetblue(\markovstate)$, where $ \Delta \actionsetblue(\markovstate) \subset [0,1]^{|\actionsetblue(\markovstate)|}$ is the simplex over $\actionsetblue(\markovstate)$, and $\mixedpolicyred : \markovstateset \rightarrow \Delta \actionsetred(\markovstate)$, respectively. 
Now, introducing a discount factor $\gamma \in (0,1]$, we can formally write the expected cost of the game as
\begin{align}\label{eq:expectedOutcome}
&\outcome{\markovstate(0),\mixedpolicyblue,\mixedpolicyred}  =  \expectation{\sum_{t=0}^\infty
 \gamma^t C\left(\markovstate(t),a_\text{blue}(t)\right)}{}.  
\end{align}
The zero-cost self-loop on the goal node ensures no further costs are accrued once all blue robots reach the goal. 
Although in our original problem $\gamma=1$, choosing $\gamma \in (0,1)$ 
guarantees that equilibrium policies and a value for the zero-sum stochastic game exist~\cite{shapley1953stochastic,filar2012competitive}. 
Specifically, we seek a pair of strategies $\mixedpolicyblue^*(\cdot),\mixedpolicyred^*(\cdot)$ and a value $V(s)$ which correspond to a NE, so that
\begin{align}\label{eq:NEcondition}
    \outcome{s, \mixedpolicyblue^*, \mixedpolicyred^*} \leq \outcome{s, \mixedpolicyblue, \mixedpolicyred^*}, \notag\\
    \outcome{s, \mixedpolicyblue^*, \mixedpolicyred^*} \geq \outcome{s, \mixedpolicyblue^*, \mixedpolicyred}, \notag\\
    V(s) \triangleq \outcome{s, \mixedpolicyblue^*, \mixedpolicyred^*},  
\end{align}
for all $s \in \markovstateset$ and all valid mixed strategies $\mixedpolicyblue,\mixedpolicyred$. 
In terms of our original problem where $\gamma = 1$, this NE 
is an \emph{approximate} solution, and we will also need to guarantee that the blue robots reach the goal under $\mixedpolicyblue^*$. 

\section{Theoretical Analysis}
We provide some theoretical guarantees before moving on to the main numerical method. Proofs of theorems are omitted for space and the interested reader is referred to~\cite{berneburg2024multi}. 

\subsection{Handling Multiple Blue Robots}\label{sec:multiAgent}


Thus far, we have considered that there are $M \geq 1$ blue robots, but we can consider only a single blue robot, without losing generality. 
Instead of considering multiple robots moving on the graph $\graph{}(t)$, we can equivalently consider a single robot moving on a modified joint state graph $\jointgraph{}(t) = (\jointvertexset,\jointedges,\jointweightmatrix{}(t))$, where each node in $\jointgraph{}(t)$ corresponds to a position vector $\blueposition{}(t)$~\cite{limbu2023team}. 
Therefore, we have a joint graph set $\jointgraphset = \{\jointgraph{1}, \dots, \jointgraph{\numbergraphs} \}$, where $\jointgraph{k} = (\jointvertexset,\jointedges,\jointweightmatrix{k})$ and each joint weighted adjacency matrix $\jointweightmatrix{k} \in \jointweightmatrixset$ corresponds to 
weighted adjacency matrix $\weightmatrix{k}$. 
%
%
To construct this, we assign each position vector $\blueposition{}(t) \in \nodeset^\numberagents$ to a node $\ell$ in $\jointgraph{}$. 
Then, for each node $\ell \in \jointvertexset$, we determine its out-neighbors by finding the next position for each action of the blue player, and assign the edge weight in $\jointweightmatrix{k}$ to be the cost of that action in $\graph{k}$, for each $\graph{k} \in \graphset$. 
Therefore, we are able to convert any game with $\numberagents>1$ on graph $\graph{}(t)$ to a game with $\numberagents=1$ on graph $\jointgraph{}(t)$. 
Finally, we can reduce the size of the joint state graph by considering \emph{indistinguishable} robots. 
Figure~\ref{fig:JSG} shows an example joint state graph for two robots. 

\begin{figure}
    \subfigure[]{
    %
    \includegraphics[width=0.2\linewidth]{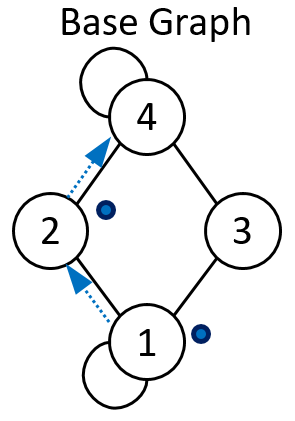}
    %
    }
\subfigure[]{\includegraphics[width=0.6\linewidth]{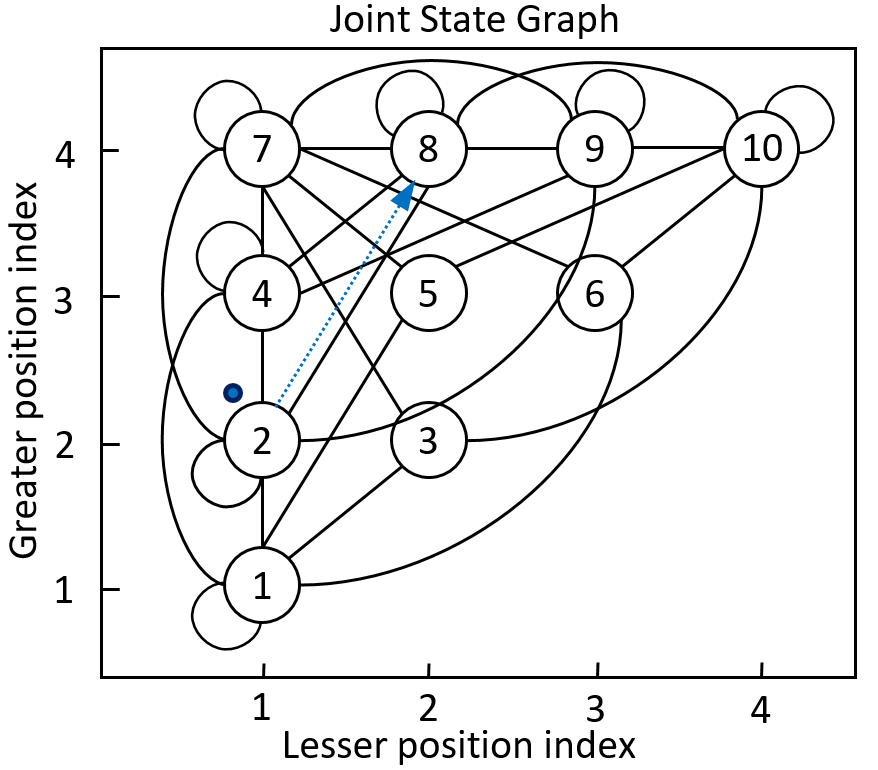}}
\caption{(a) shows a simple example graph, and (b) shows its joint state graph with two robots. The position of each node of the JSG indicates what nodes in the original graph it corresponds to. 
The markers show equivalent positions and the arrows equivalent actions in both graphs. 
} \label{fig:JSG}
\vspace{-10pt}
\end{figure}

\subsection{Game Solution and Value Preliminaries}

Here we provide some preliminaries to aid in analysis. 
Define the ``highest cost'' graph as $\worstcasegraph = (\nodeset,\edgeset,\worstcaseweights)$, where
    $\worstcaseweights_{ij} \triangleq \max_k \weightmatrix{k}_{ij}$ . 
To simplify notation, let $d_{\graph{}}(i,j)$, for $i,j \in \nodeset$ denote the weighted distance from node $i$ to node $j$ on graph $\graph{}$. 
Defining $\edgeset^- \triangleq \edgeset \setminus (N,N)$, the following condition on the discount factor $\gamma$ will be useful later: 
\begin{align}\label{eq:discountFactorCondition}
    \gamma &\geq 1 - \frac{C_\text{min}}{d_\text{max}},\\
    C_\text{min} &\triangleq \min_{(i,j)\in \edgeset^-, k \in \naturalSet{\numbergraphs}} \weightmatrix{k}_{i,j},\quad 
    d_\text{max} \triangleq \max_{p\in \nodeset} d_{\overline{\graph{}}}(p,N). \notag
    \vspace{-25pt}
\end{align}
$C_\text{min}$ is the minimum possible stage cost that can be achieved without $\blueposition{}(t)=N$, which is strictly positive by assumption. 
$d_\text{max}$ is the maximum distance of any node from the goal on the highest cost graph $\overline{\graph{}}$.  
Since 
the right-hand-side of the inequality in~\eqref{eq:discountFactorCondition} is strictly less than $1$,
$\gamma \in (0,1)$ can always be chosen so that~\eqref{eq:discountFactorCondition} is satisfied. 

\subsection{Bounds for the Outcome}

Here we provide security strategies for both players which allow us to bound the expected game outcome. 
These bounds will be simple to compute and can be achieved by simple strategies, compared to finding the game solution. 

\subsubsection{Blue Player's Security Strategy}

Consider a policy for the blue player $\securitypolicyblue$, with a corresponding upper bound $\upperoutcome{s}$ on the expected cost, where the action in state $s\in \markovstateset$ is 
\begin{align}\label{eq:blueSecurityStratAction}
    a_\text{blue} &\in \text{arg} \min_{p^+\in \actionsetblue(s)} \weightmatrix{k}_{p(t)p^+} +d_{\worstcasegraph}(p^+,N) ,\notag\\
    \upperoutcome{s} &= \min_{p^+\in \actionsetblue(s)} \weightmatrix{k}_{p(t)p^+} + d_{\worstcasegraph}(p^+,N). 
\end{align}
This means that the blue player optimizes over the edge costs of the current graph with the distance over the highest cost graph $\worstcasegraph$ as an estimate of the future value. 

\begin{theorem}
    For $\numberagents=1$, if the blue player follows the policy $\securitypolicyblue$ defined by~\eqref{eq:blueSecurityStratAction}, then 
    \begin{align}
        \upperoutcome{s} \geq \outcome{s,\securitypolicyblue,\mixedpolicyred} \text{ and } \upperoutcome{s} \geq V(s), 
    \end{align}
    for all $s \in \markovstateset$ and  $\mixedpolicyred$, with $\upperoutcome{s}$ defined in~\eqref{eq:blueSecurityStratAction}. 
\end{theorem}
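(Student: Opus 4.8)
The plan is to exhibit $\upperoutcome{s}$ as a one-step supersolution of a dynamic-programming inequality under $\securitypolicyblue$ that dominates the cost-to-go \emph{no matter what} the red player does, and then telescope. The crux is the following domination property of the worst-case distance: for any state $s=(p,\graph{k},\alpha)$, if blue plays the security action $p^+$, then for every successor $s^+=(p^+,\graph{k'},\alpha^+)$ produced by any red action,
\begin{align}
    \upperoutcome{s^+} \leq d_{\worstcasegraph}(p^+,\numbernodes).
\end{align}
To see this, expand $\upperoutcome{s^+}=\min_{p^{++}\in\actionsetblue(s^+)} \weightmatrix{k'}_{p^+ p^{++}}+d_{\worstcasegraph}(p^{++},\numbernodes)$, bound $\weightmatrix{k'}_{p^+ p^{++}}\leq \worstcaseweights_{p^+ p^{++}}$ by the definition of $\worstcasegraph$, and recognize the resulting minimum as the shortest-path Bellman equation $d_{\worstcasegraph}(p^+,\numbernodes)=\min_{p^{++}} \worstcaseweights_{p^+ p^{++}}+d_{\worstcasegraph}(p^{++},\numbernodes)$, which holds because edge weights are nonnegative and $\numbernodes$ is reachable from every node on $\worstcasegraph$.

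First I would combine this with the definition of the security action. Since $\upperoutcome{s}=\weightmatrix{k}_{p p^+}+d_{\worstcasegraph}(p^+,\numbernodes)$ and the stage cost of the security move is exactly $\stagecost{s,a_\text{blue}}=\weightmatrix{k}_{p p^+}$, the displayed inequality yields the one-step bound
\begin{align}
    \upperoutcome{s} \geq \stagecost{s,a_\text{blue}} + \upperoutcome{s^+},
\end{align}
valid along every realization regardless of the red action. Telescoping over a horizon $T$ along any sample path and using $\upperoutcome{\cdot}\geq 0$ gives $\sum_{t=0}^{T-1}\stagecost{s(t),a_\text{blue}(t)}\leq \upperoutcome{s(0)}$ for all $T$; since the summands are nonnegative, the partial sums are bounded and the undiscounted return converges with $\sum_{t=0}^{\infty}\stagecost{s(t),a_\text{blue}(t)}\leq \upperoutcome{s(0)}$.

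Next I would take expectation over $\mixedpolicyred$ (justified by monotone convergence, as the costs are nonnegative) and use $\gamma^t\leq 1$ to obtain $\outcome{s,\securitypolicyblue,\mixedpolicyred}=\expectation{\sum_{t=0}^\infty \gamma^t \stagecost{s(t),a_\text{blue}(t)}}{}\leq \upperoutcome{s}$, which is the first claim; note this uses only $\gamma\leq 1$, not the tighter condition~\eqref{eq:discountFactorCondition}. For the second claim $\upperoutcome{s}\geq V(s)$, I would invoke existence of the game value (guaranteed for $\gamma\in(0,1)$): since $\securitypolicyblue$ is one admissible blue policy, $V(s)=\min_{\mixedpolicyblue}\max_{\mixedpolicyred}\outcome{s,\mixedpolicyblue,\mixedpolicyred}\leq \max_{\mixedpolicyred}\outcome{s,\securitypolicyblue,\mixedpolicyred}$, and the first claim bounds the right-hand side by $\upperoutcome{s}$.

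The main obstacle is the domination inequality. One must argue carefully that replacing the realized weights $\weightmatrix{k'}$ by the worst-case weights $\worstcaseweights$ and then recognizing the shortest-path fixed point makes $d_{\worstcasegraph}(\cdot,\numbernodes)$ an upper bound on the security cost-to-go for \emph{every} red response. This is exactly the step that decouples the bound from the red player's strategy and lets the telescoping argument go through uniformly over all $\mixedpolicyred$; once it is in hand, the remaining steps are routine.
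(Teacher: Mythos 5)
Your proposal is correct and follows essentially the same route as the paper's (sketched) proof: the key inequality $\upperoutcome{s}\leq d_{\worstcasegraph}(p,N)$, obtained from $\weightmatrix{k}_{ij}\leq \worstcaseweights_{ij}$ and the shortest-path Bellman equation on $\worstcasegraph$, is exactly the paper's central step showing that $d_{\worstcasegraph}(p^+,N)$ dominates the future cost under $\securitypolicyblue$, and the final bound $V(s)\leq\upperoutcome{s}$ is obtained the same way via admissibility of $\securitypolicyblue$. Your contribution is simply to make the paper's inductive claim rigorous through the explicit one-step supersolution inequality, pathwise telescoping, and monotone convergence, which the paper omits for space.
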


\begin{proof}
For reasons of space, we provide only a sketch of the proof. 

If the blue player follows the shortest path on $\worstcasegraph$, then $\outcome{s,\dots}\leq d_{\worstcasegraph}(p,N)$, regardless of the red player's policy. 
Now, if the blue player follows policy $\securitypolicyblue$ at the current time step, but follows the shortest path on $\worstcasegraph$ for all future time, then we have $\outcome{s,\dots}\leq \upperoutcome{s}$. However, we need to show that this holds if the blue player follows policy $\securitypolicyblue$ for all future time. 
Because $\weightmatrix{k}_{p(t)p^+} \leq \worstcaseweights_{p(t)p^+}$, we have $\upperoutcome{s} \leq d_{\worstcasegraph}(p,N)$, trivially from~\eqref{eq:blueSecurityStratAction}. This means that $d_{\worstcasegraph}(p^+,N)$ can act as an upper bound for the future cost, if the blue player uses $\securitypolicyblue$ for all future time. 
%
This allows us to guarantee that the bound is satisfied if the blue player chooses its action according to~\eqref{eq:blueSecurityStratAction}.
%
Finally, to bound the value, note that
\begin{align}
    \upperoutcome{s} \geq \outcome{s,\securitypolicyblue,\mixedpolicyred^*} \geq V(s). 
\end{align}
\end{proof}

This simple strategy enables the blue team to reach the goal with a conservative bound on the cost. 

\subsubsection{Red Player's Security Strategy}
Now, consider a policy $\securitypolicyred$ for the red player, with corresponding lower bound $\loweroutcome{s}$ on the expected cost, where the action in state $s\in \markovstateset$ is 
\begin{align}\label{eq:redSecurityStratAction}
    a_\text{red} \in \text{arg} \max_{k^+ \in \actionsetred(s)} \min_{p^+\in \actionsetblue(s)} \weightmatrix{k}_{pp^+} + \gamma^{N-1} d_\graph{k^+}(p^+,N),\notag\\
    \loweroutcome{s} \triangleq \max_{k^+ \in \actionsetred(s)} \min_{p^+\in \actionsetblue(s)} \weightmatrix{k}_{pp^+} + \gamma^{N-1}d_\graph{k^+}(p^+,N). 
\end{align}
The red player assumes it can only change the graph at the current time and that it will remain constant in the future. 
\begin{theorem}
    Under the condition~\eqref{eq:discountFactorCondition}, for $\numberagents=1$, if the red player follows the policy $\securitypolicyred$ defined by~\eqref{eq:redSecurityStratAction}, then 
    \begin{align}
        \loweroutcome{s}\leq \outcome{s,\mixedpolicyblue,\securitypolicyred} \text{ and } \loweroutcome{s}\leq V(s),
    \end{align}
    for all $s \in \markovstateset$ and $\mixedpolicyblue$, with $\loweroutcome{s}$ defined in~\eqref{eq:redSecurityStratAction}. 
\end{theorem}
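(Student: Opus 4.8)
The plan is to reduce both inequalities to a single statement about the Markov decision process that the blue player faces once the red policy is frozen to $\securitypolicyred$. Define blue's best-response value $W(s) \triangleq \min_{\mixedpolicyblue} \outcome{s,\mixedpolicyblue,\securitypolicyred}$. By definition $\outcome{s,\mixedpolicyblue,\securitypolicyred} \geq W(s)$ for every $\mixedpolicyblue$, and since $\securitypolicyred$ is one admissible red policy, the minimax characterization of the value gives $V(s) = \max_{\mixedpolicyred}\min_{\mixedpolicyblue}\outcome{s,\mixedpolicyblue,\mixedpolicyred} \geq \min_{\mixedpolicyblue}\outcome{s,\mixedpolicyblue,\securitypolicyred} = W(s)$. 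Hence it suffices to prove the single bound $W(s) \geq \loweroutcome{s}$, and both conclusions of the theorem follow at once. This mirrors the tail of the blue-player proof, where $\upperoutcome{s}$ was shown to dominate $\outcome{s,\securitypolicyblue,\mixedpolicyred^*}$ and then $V(s)$.

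To establish $W(s) \geq \loweroutcome{s}$, I would replace the blue proof's device ``follow the shortest path on $\worstcasegraph$'' with the device ``freeze the graph.'' Consider first the simpler red behavior that, from $s=(p,\graph{k},\alpha)$, switches once to the maximizing graph $\graph{k^+}$ of \eqref{eq:redSecurityStratAction} and then keeps it fixed forever; this is always feasible since the self-loop action, costing no ammo, lies in $\actionsetred$. Against a frozen $\graph{k^+}$ the blue player solves an ordinary discounted shortest-path problem, so after any first move $p^+$ its discounted cost-to-go is minimized along a path from $p^+$ to $N$ on $\graph{k^+}$. The key estimate is that this minimal cost-to-go is at least $\gamma^{N-1} d_{\graph{k^+}}(p^+,N)$: condition \eqref{eq:discountFactorCondition} guarantees that no detour lowers the discounted cost, so an optimal path is simple and uses at most $N-1$ edges, whence its discounted cost is at least $\gamma^{N-1}$ times its length, which is at least $\gamma^{N-1} d_{\graph{k^+}}(p^+,N)$. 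Minimizing $\weightmatrix{k}_{pp^+} + \gamma^{N-1} d_{\graph{k^+}}(p^+,N)$ over $p^+$ and using that $\graph{k^+}$ is red's maximizer recovers exactly $\loweroutcome{s}$, so freezing the graph already guarantees at least $\loweroutcome{s}$.

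The remaining step is to promote this from the freeze-the-graph behavior to the stationary, re-optimizing policy $\securitypolicyred$, which may switch graphs again at later states. The mechanism is that at every state $\securitypolicyred$ selects the graph maximizing blue's one-step lookahead cost, with ``keep the current graph'' always among the options, so re-optimization should only make blue's problem harder. I would formalize this as a dynamic-programming comparison: verify that $\loweroutcome{}$ is a subsolution of the Bellman operator $(\mathcal{B}V)(s) = \min_{p^+}[\weightmatrix{k}_{pp^+} + \gamma V(s^+)]$ induced by fixing red to $\securitypolicyred$, and then invoke the monotonicity and contraction of $\mathcal{B}$ to conclude $\loweroutcome{s} \leq W(s)$. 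The main obstacle is precisely this subsolution verification: the conservative factor $\gamma^{N-1}$ attached to each fixed-graph distance must be reconciled across successive states when red re-selects the graph, and a naive one-step telescoping loses too much. The gap that makes the comparison close is exactly the one enforced by \eqref{eq:discountFactorCondition}, namely that the minimum edge cost $C_\text{min}$ dominates the per-step discounting loss $(1-\gamma)d_\text{max}$; an alternative to the subsolution route is to argue directly that the value against the re-optimizing $\securitypolicyred$ dominates the value against the freeze behavior, but either way controlling this discount bookkeeping is the crux of the argument.
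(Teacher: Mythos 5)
Your overall architecture coincides with the paper's own proof: the paper likewise (i) obtains the value bound from the best-response bound via $\loweroutcome{s} \leq \outcome{s,\mixedpolicyblue^*,\securitypolicyred} \leq V(s)$, (ii) analyzes the frozen-graph red policy $\mixedpolicyred'$ (with $a_\text{red}(t)=k$ whenever $\graph{}(t)=\graph{k}$), defines the discounted best-response distance $d^\gamma_{\graph{k}}(p) = \min_{\mixedpolicyblue}\outcome{(p,\graph{k},\alpha),\mixedpolicyblue,\mixedpolicyred'}$, and uses condition~\eqref{eq:discountFactorCondition} to argue that optimal paths contain no repeated nodes, hence traverse at most $N-1$ edges, giving $d^\gamma_{\graph{k}}(p) \geq \gamma^{N-2} d_{\graph{k}}(p,N)$ (your $\gamma^{N-1}$ accounting is the same estimate after one step of discounting), and (iii) promotes ``switch once according to~\eqref{eq:redSecurityStratAction}, then freeze'' to the stationary policy $\securitypolicyred$. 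Through your second paragraph you have essentially reconstructed the paper's argument.

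The gap is in your third paragraph, which announces the decisive step rather than carrying it out --- and it is genuinely the decisive step, not a routine verification. The paper closes the promotion with exactly the certificate your subsolution plan requires: the inequality $\loweroutcome{(p,\graph{k},\alpha)} \geq \gamma^{N-2} d_{\graph{k}}(p,N)$ at every state. Granting it, the argument does close the way you sketch: with red's maximizing choice $k^+$ at state $s$, the definition of $\loweroutcome{s}$ gives $\loweroutcome{s} \leq \weightmatrix{k}_{pp^+} + \gamma^{N-1} d_{\graph{k^+}}(p^+,N)$ for every blue reply $p^+$, and the certificate applied at the successor state $s'=(p^+,\graph{k^+},\alpha')$ upgrades this to $\loweroutcome{s} \leq \weightmatrix{k}_{pp^+} + \gamma\,\loweroutcome{s'}$, which is your Bellman subsolution property; monotone contraction then yields $\loweroutcome{s} \leq \min_{\mixedpolicyblue}\outcome{s,\mixedpolicyblue,\securitypolicyred}$. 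But verifying the certificate is not a one-line estimate: the trivial bound (take $k^+=k$ in the max and pull $\gamma^{N-1}$ out front, using $d_{\graph{k}}(p,N) \leq \weightmatrix{k}_{pp^+} + d_{\graph{k}}(p^+,N)$) yields only $\loweroutcome{(p,\graph{k},\alpha)} \geq \gamma^{N-1} d_{\graph{k}}(p,N)$, one factor of $\gamma$ short of what the subsolution step consumes, so the recursion loses a factor of $\gamma$ per stage and the telescoping diverges --- precisely the failure mode you flag yourself. Closing that factor is where~\eqref{eq:discountFactorCondition} and the simple-path structure must actually be deployed (e.g., tying the discount exponent to the strictly decreasing hop count to the goal along a simple optimal path, rather than using the uniform worst-case exponent $N-1$ at every state), and your proposal leaves this at ``controlling this discount bookkeeping is the crux.'' Until that inequality is established, the theorem is not proved; everything preceding it in your write-up is correct and matches the paper.
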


\begin{proof}
    Again, for reasons of space, we provide only a sketch of the proof. 
    First, consider the case where the red player never changes the graph, letting $\mixedpolicyred'$ be such that $a_\text{red}(t) = k$ for $\graph{}(t) = \graph{k}$. Now, define
    \begin{align}\label{eq:minDiscountedDistance}
    d^\gamma_{\graph{k}}(p) \triangleq \min_{\mixedpolicyblue} \outcome{(p,\graph{k},\alpha), \mixedpolicyblue, \mixedpolicyred'},
\end{align}
which is the minimum discounted distance to the goal on this graph, and is the result of the blue player's best response to this policy $\mixedpolicyred'$. If condition~\eqref{eq:discountFactorCondition} is satisfied, then it can be shown that shortest paths to the goal do not contain repeated nodes. Therefore, we can upper bound the number of edges traversed on such a path by $\numbernodes-1$. This allows us to claim
\begin{align}
    d^\gamma_{\graph{k}}(p,N) 
    \geq \gamma^{N-2}d_{\graph{k}}(p,N),
\end{align}
because $\gamma^{N-2}$ is the lowest discount which can be achieved on a shortest path. 
Now, consider that the red player selects its action according to~\eqref{eq:redSecurityStratAction} for the current time step and leaves the graph constant for future time steps, in which case the outcome is lower bounded by $\loweroutcome{s}$ for all $\mixedpolicyblue$, by its definition. We must again show that this still holds if the red player follows $\securitypolicyred$ for all future time. 
This follows from the fact that
\begin{align}
    \loweroutcome{(\blueposition,\graph{k},\alpha)} \geq \gamma^{N-2}d_\graph{k}(p,N). 
\end{align}
Finally, to bound the value, note that
\begin{align}
    \loweroutcome{s} \leq \outcome{s,\mixedpolicyblue^*,\securitypolicyred} \leq V(s). 
\end{align}
\end{proof}
This simple strategy provides the red player a conservative bound on its payoff. 

\subsection{Game Solution}

Although the objective is for the robots to reach the goal, 
the introduction of the discount factor $\gamma < 1$ means that the blue player can achieve finite, potentially lower costs without doing so. 
In such cases, solutions to the game are not necessarily solutions to the original problem. 
However, the following condition ensures that, if both players use NE strategies $\mixedpolicyblue^*$ and $\mixedpolicyred^*$, the blue player will reach the goal. 

\begin{theorem}\label{th:solution}
    For $\numberagents=1$, given an initial condition $\markovstate(0) \in \markovstateset$, under the Markov chain with state transitions given by~\eqref{eq:gameDynamics} 
    with actions chosen according to $\mixedpolicyblue^*,\mixedpolicyred^*$ satisfying~\eqref{eq:NEcondition}, if $\gamma$ satisfies~\eqref{eq:discountFactorCondition},
    then $
        \lim_{t\rightarrow \infty } \text{Pr}\left(\blueposition{} (t) = N \right) = 1 $.
\end{theorem}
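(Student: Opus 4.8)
The plan is to show that, under the NE pair $(\mixedpolicyblue^*,\mixedpolicyred^*)$, the induced finite Markov chain on $\markovstateset$ is absorbed into the goal with probability one. First I would observe that the goal is absorbing under $\mixedpolicyblue^*$: at any state with $\blueposition{}=\numbernodes$ the zero-weight self-loop yields stage cost $0$ and continuation value $0$, whereas every other action costs at least $C_\text{min}>0$; since $\mixedpolicyblue^*$ is a best response to $\mixedpolicyred^*$ by \eqref{eq:NEcondition}, it must stay at the goal with probability one (the self-loop cost is $0$ in every graph, so red's action is irrelevant here). Consequently $\text{Pr}(\blueposition{}(t)=\numbernodes)$ is nondecreasing in $t$, and the claim reduces to showing that the goal is reached in finite time almost surely from every initial state.

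Because $\markovstateset$ is finite, every trajectory of the chain enters a closed recurrent class almost surely, so it suffices to prove that the goal set is the only recurrent class. Suppose, for contradiction, that there is a closed recurrent class $R\subseteq\{s:\blueposition{}\neq\numbernodes\}$. From any $s\in R$ the robot never leaves $R$, so $\blueposition{}(t)\neq\numbernodes$ for all $t$ and each stage cost is at least $C_\text{min}$, whence
\[ V(s)=\outcome{s,\mixedpolicyblue^*,\mixedpolicyred^*}\geq\sum_{t=0}^{\infty}\gamma^t C_\text{min}=\frac{C_\text{min}}{1-\gamma}. \]
On the other hand, the blue security strategy gives $V(s)\leq\upperoutcome{s}\leq d_{\worstcasegraph}(\blueposition{},\numbernodes)\leq d_\text{max}$, while \eqref{eq:discountFactorCondition} is exactly the statement $C_\text{min}/(1-\gamma)\geq d_\text{max}$. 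Chaining these inequalities forces $V(s)=d_\text{max}=C_\text{min}/(1-\gamma)$, with every stage cost along the trajectory equal to $C_\text{min}$.

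To close the contradiction I would exhibit a strictly cheaper deviation for the minimizing blue player: following $\securitypolicyblue$ from $s$ reaches $\numbernodes$ in finitely many steps, so its discounted cost is a finite sum of edge weights summing to at most $d_\text{max}$, with every term after the first strictly discounted by $\gamma<1$; this makes the deviation's cost strictly below $d_\text{max}=C_\text{min}/(1-\gamma)=V(s)$, contradicting the optimality of $\mixedpolicyblue^*$ against $\mixedpolicyred^*$. The hard part is precisely this final strictness: when \eqref{eq:discountFactorCondition} holds with equality and the cheapest goal-reaching deviation takes a single step, the finite- and infinite-horizon costs can coincide and the strict inequality degenerates. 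I would handle this boundary case either by invoking the strict form of \eqref{eq:discountFactorCondition} (always attainable, since its right-hand side is strictly below $1$) or by a tie-breaking argument showing that among equally optimal actions one can always select a goal-reaching one, so that $R$ cannot be closed. The main accounting is routine; it is this degenerate-equality step where care is needed.
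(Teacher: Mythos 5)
Your proof follows essentially the same route as the paper's own: goal states are absorbing under the NE, any trajectory set avoiding the goal forces $V(s)\geq C_\text{min}/(1-\gamma)$, and the security-strategy bound $V(s)\leq\upperoutcome{s}\leq d_\text{max}$ combined with~\eqref{eq:discountFactorCondition} rules this out, after which standard finite-chain absorption theory (your recurrent-class decomposition versus the paper's absorbing-chain reachability argument is a cosmetic difference) yields $\lim_{t\rightarrow\infty}\text{Pr}(\blueposition{}(t)=N)=1$. Your extra care at the boundary is in fact warranted: the paper's sketch silently asserts the strict inequality $\upperoutcome{s} < \frac{1}{1-\gamma}C_\text{min}$, which does not follow when~\eqref{eq:discountFactorCondition} holds with equality (e.g., a self-loop of weight $c$ at a node one edge of weight $w$ from the goal with $1-\gamma = c/w$ makes ``wait forever'' exactly optimal), so your repair via the strict form of the condition is precisely the needed patch.
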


\begin{proof}
    We present only a sketch of the proof for reasons of space. 
    If the policies of both players are fixed to their NE values, the game reduces to a Markov chain. Therefore, we simply need to show that any state $s = (p,k,\alpha) \in \markovstateset$ where $p = N$ is an absorbing state and any other states are not absorbing for this Markov chain~\cite{kemeny1969finite}. Then, if we can show that the Markov chain is absorbing, we have our desired result. 
    
    
    Because the goal node $N$ has a self-loop with an edge cost of zero, we know that if the blue player reaches the goal, then it accrues no further costs under an equilibrium blue policy, so $\outcome{(N,k,\alpha), \mixedpolicyblue^*, \mixedpolicyred^*} = 0$ and so we must have $\blueposition{}(t)=N$ for all future time and these states are absorbing. 
    
    
    If, under some policy $\mixedpolicyblue$, the blue player never reaches the goal (the probability of reaching the goal from state $\markovstate(t)\in \markovstateset$ is zero), then the value can be lower bounded: 
    \begin{align}
        \outcome{s,\mixedpolicyblue,\mixedpolicyred} \geq \sum_{t=1}^\infty \gamma^t C_\text{min} = \frac{1}{1-\gamma} C_\text{min}, 
    \end{align}
    by assuming that it gets the minimum cost at every time and from evaluating the geometric series.  

    Because we have an upper bound for the value $\upperoutcome{s}$, if we can guarantee that $\upperoutcome{s} < \frac{1}{1-\gamma} C_\text{min}$, then we know that there is a positive probability that, under NE policies, the blue robot reaches the goal at some future time step from state $s$. Noting that $\upperoutcome{s} \leq d_\text{max}$, from the condition~\eqref{eq:discountFactorCondition}, we know that $V(s) \leq \upperoutcome{s} < \frac{1}{1-\gamma} C_\text{min}$ for all $s \in \markovstateset$, so there must be a strictly positive probability that the blue robot reaches the goal from every state $s$. This also guarantees that no state with $p \neq N$ is absorbing. 


    Now, because we have shown that, from every state $\markovstate(t)\in\markovstateset$ there is a strictly positive probability that $p(t') = N$ for some $t' \geq t$, we can claim that the Markov chain induced by NE policies is absorbing. 
    Therefore, 
    we can guarantee that $\lim_{t\rightarrow \infty } \text{Pr}\left(\blueposition{} (t) = N \right) = 1$. 
\end{proof}
Therefore, when $\gamma$ is high enough, the blue team is incentivized to reach the goal.

\section{Numerical Methodology}\label{sec:numMethod}

Because we believe that solving for these equilibrium strategies $\mixedpolicyblue^*, \mixedpolicyred^* $ analytically is not feasible within this work, we turn to numerical methods.
We use a value iteration method based on the work of Shapley~\cite{shapley1953stochastic,filar2012competitive}. 
First, define the Q function $\qmatrix(s) \in \mathbb{R}_{\geq}^{|\actionsetred(s)| \times |\actionsetblue(s)|}$ as a function of the state $s \in \markovstate$, where each element is
\begin{align}
    \qmatrix_{a_r a_b}(s) = C\left(s,a_\text{blue}\right) + \gamma V(s^+),
\end{align}
where $a_r \in \naturalSet{|\actionsetred(s)|}$ and $a_b \in \naturalSet{|\actionsetblue(s)|}$ are action indices corresponding to action $a_\text{red}$ of the red player and $a_\text{blue}$ of the blue player, respectively, 
%
and $s^+$ is the state at the next time step from \eqref{eq:gameDynamics}, given $a_\text{blue}$ and $a_\text{red}$. 
This function gives the expected game outcome if each player takes an arbitrary action at the current time step, and then follows the equilibrium policy 
in the future. 
According to Shapley's Theorem, the equilibrium policies $\mixedpolicyblue^*(s)$, $\mixedpolicyred^*(s)$ and the game value can be found by solving the matrix game defined by $\qmatrix(s)$, for each state $s \in \markovstateset$, where the red row player maximizes and the blue column player  minimizes~\cite{filar2012competitive}. However, because the value function is unknown, 
we use numerical methods to calculate the Q function. 

For each iteration~$\iter$, we compute the estimate of the policies $\mixedpolicyblue^\iter$, $\mixedpolicyred^\iter$ for the given estimate of the Q function $\qfunctionapprox{\iter}$, by solving the matrix game corresponding to $\hat{\qmatrix}^\iter(s)$ for each state $s \in \markovstateset$. 
Next, we update the estimate of the Q function $\qfunctionapprox{\iter+1} $ using those policies as follows:
\begin{align}
    &\hat{\qmatrix}^{\iter+1}_{a_\text{b}a_\text{r}}(s) = C\left(s,a_\text{blue}\right) + \gamma {\mixedpolicyred^\iter}^T \hat{\qmatrix}^\iter(s^+) \mixedpolicyblue^\iter, 
\end{align}
for each $s \in \markovstateset$, $a_b \in \naturalSet{|\actionsetblue(s)|}$, and $a_\text{r} \in \naturalSet{|\actionsetred(s)|}$, and where $s^+$ is found using \eqref{eq:gameDynamics} and $a_\text{blue}$ is the action corresponding to $a_\text{b}$. 
When this converges, $\mixedpolicyblue^\iter,\mixedpolicyred^\iter$ 
can approximate the equilibrium strategies $\mixedpolicyblue^*, \mixedpolicyred^*$.

\subsection{Simplifying Computation} 

Here we provide methods which will simplify the numerical computation of the value. 


\subsubsection{Dominated Blue Actions}

It is possible to determine graph edges which are never beneficial for the blue robot to traverse, by looking for strictly dominated actions. 
For a given game state $\markovstate(t) \in \markovstateset$, let $p(t) = p_0 \in \nodeset$, and consider two actions of the blue player, $p_1,p_2 \in \actionsetblue(\markovstate)$. A sufficient condition for action $p_2$ to dominate action $p_1$ is
\begin{align}\label{eq:conservativeDominationCondition}
    \stagecost{\markovstate,p_1} + \gamma\loweroutcome{(p_1,\graph{a_\text{r}},\alpha_{a_\text{r}}) } > \notag\\
    \stagecost{\markovstate,p_2} + \gamma\upperoutcome{(p_2,\graph{a_\text{r}},\alpha_{a_\text{r}}) },
\end{align}
for all $a_\text{r} \in \actionsetred(\markovstate)$, where $\alpha_{a_\text{r}}$ is the red player's new ammo count as determined by $\markovstate$ and $a_\text{r}$ according to~\eqref{eq:gameDynamics}. 
Because the blue player never benefits from using a dominated strategy, 
if condition~\eqref{eq:conservativeDominationCondition} is satisfied for some $p_0,p_1,p_2$ and all $\alpha,\graph{}(t)$, then edge $(p_0,p_1)$ can be removed from the graphs, without changing the results. Therefore, one could iterate over all pairs of actions for all nodes, removing the edges corresponding to dominated actions and any nodes from which the goal cannot be reached. 
This allows the game to be solved for fewer states and actions for the blue player, without changing the results. 

\subsubsection{Sub-Game Formulation}

From equations~\eqref{eq:expectedOutcome} and~\eqref{eq:NEcondition}, we can write the value as
\begin{align}
V(s) 
&= \sum_{s' \in \markovstateset} \text{Pr}\Bigl\{s^+=s' 
\Bigr\} 
\Bigl( \stagecost{s,p'} + \gamma V(s') \Bigr) ,\notag
\end{align}
for a given game state $\markovstate(t) = s = (\blueposition{},\graph{},\alpha) \in \markovstateset$, where $s^+ = (\blueposition{}^+,\graph{}^+,\alpha^+) = \markovstate(t+1) \in \markovstateset$ and $s' = (p', \graph{}', \alpha')\in \markovstateset$
, and $\text{Pr}\{s(t+1)=s'\}$ is found according to~\eqref{eq:gameDynamics} when the actions are chosen according to NE policies. 
Now, 
either the graph remains the same, in which case $s^+ = (\blueposition{}^+,\graph{},\alpha)$, or it is changed by the red player, in which case $s^+ = (\blueposition{}^+,\graph{}^+,\alpha-1)$. This implies that the value of the game at some ammo state $\alpha$ depends only on the value at lower ammo states $\alpha' \leq \alpha$. We can take advantage of this to formulate sub-games which allow us to solve the original game. 

Instead of considering the graph $\graph{}$ and the ammo $\alpha$ to be part of the game state, we consider them to be parameters of a sub-game with game state
$\blueposition{}$. 
The sub-game ends when red takes action to change the graph. 
The actions of the players, the game dynamics, and the stage costs are the same as before, but now we consider a terminal cost when the graph changes. 
The value for the sub-game defined by graph $\graph{}$ and ammo $\alpha$ 
is $V_{\graph{},\alpha}(\blueposition{})$, 
and the terminal costs will be $V_{\graph{}(T),\alpha-1}(\blueposition{T})$, where $\blueposition{T}$ is blue's position when red takes action to change the graph to $\graph{}(T)\neq \graph{}$ at time $T-1$. 
The outcome of this sub-game, for $\blueposition{}(0)=p$, is 
\begin{align*}
&V_{\graph{},\alpha}(\blueposition{}) = \\
&E\Big[\sum_{t=0}^{T-1} \gamma^tC\left(\markovstate(t),\blueposition{}(t+1)\right) 
+ \gamma^T V_{\graph{}(T),\alpha-1}(\blueposition{}(T))\Big]. 
\end{align*}
Because $V_{\graph{},0}(\blueposition{}) = d_\graph{}(\blueposition{},N)$, 
the value of the full game can be found by solving the sub-games, using the value of the sub-games with ammo $\alpha$ to solve for the value of the sub-games with ammo $\alpha+1$. 
Compared to the full game with $\numbernodes \numbergraphs \maxammo$ states, 
for $\numbergraphs$ graphs of $\numbernodes$ nodes and $\maxammo$ ammo counts, the trade-off is considering $\numbergraphs \maxammo$ sub-games with $\numbernodes$ states each. 
The latter may be more computationally efficient, because the number of game states is the number of  Q-matrices which must be updated and the number of matrix games which must be solved at every iteration. 

\section{Numerical Results}

Here we show the results of applying the numerical methods in Section~\ref{sec:numMethod}. The discount factor was chosen to be $\gamma = 1-10^{-9} \approx 1$ and the red player's action graph~$\redGraph$ is complete. First, we examine results for some examples, and then we perform a statistical analysis of randomly constructed graphs. 

\subsection{Illustrative Examples}

\paragraph{Benefit of Mixing} 
Figure~\ref{fig:graphBranchLoopMix3} shows an example graph where each player mixes between three options from node $1$, with the blue player mixing between remaining at that node and between taking either branch. 
With $s(0) = (1,\graph{1},1)$, 
the blue player chooses its action according to $[\text{Pr}(p^+=1),\text{Pr}(p^+=2),\text{Pr}(p^+=3)] = [0.5,0.25,0.25]$, while the red player chooses its action according to $\mixedpolicyred^*(s(0)) = [0.5,0.25,0.25]^T$. 
The blue player seems to mix between remaining at the current node and moving towards the goal when the current graph has a high immediate edge cost, while the red player is incentivized to change the graph because the current graph has lower costs closer to the goal. 
On the other hand, if $\graph{}(0)=\graph{2}$ instead, the blue player chooses its action according to $[\text{Pr}(p^+=1),\text{Pr}(p^+=2),\text{Pr}(p^+=3)] = [0, 0.5, 0.5]$,
and the red player chooses its action according to $\mixedpolicyred^*(s(0)) = [0,0.75,0.25]^T$. 
The players each only mix between two actions, so that the blue player attempts to take the cheaper path, which the red player attempts to prevent. 

\begin{figure}
\includegraphics[width=\linewidth]{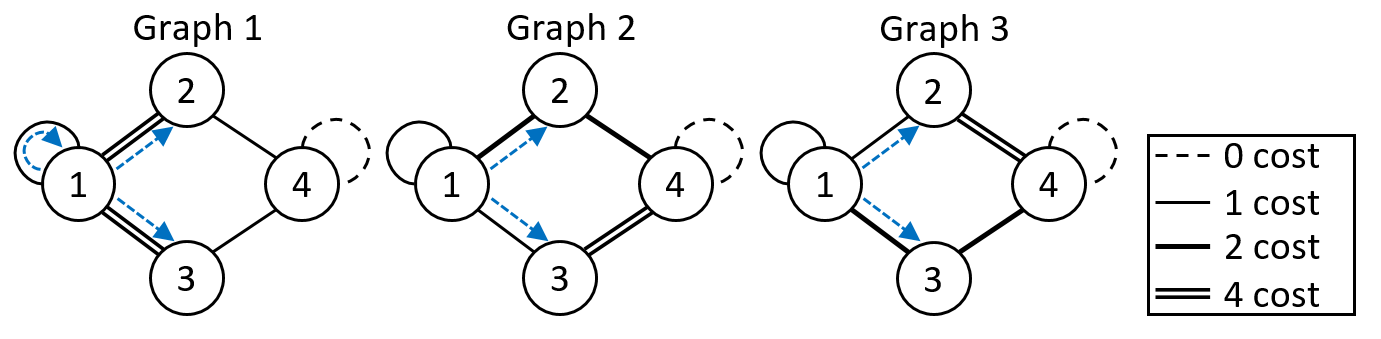}
\caption{This demonstrates mixed equilibrium policies of both players. 
The arrows indicate the possible actions of the blue player under that policy from node $1$. 
On graph $1$, the red player also mixes over its three options, but on graph $2$ or $3$, it only mixes between selecting graph $2$ and graph $3$. 
} 
\label{fig:graphBranchLoopMix3}
\vspace{-10pt}
\end{figure}

\paragraph{Benefit of Multi-Robot Coordination} 
The graph in Figure~\ref{fig:graphMultiAgents} shows a case where having more robots is beneficial for the blue player. 
Due to the longer length of these branches (compared to the graphs in Figure~\ref{fig:graphBranchLoopMix3}), the red player can simply choose $\graph{}(t+1) = \graph{1}$ when $p(t) = 4$ or $\graph{}(t+1) = \graph{2}$ when $p(t) = 3$, forcing the blue player to take the higher edge cost. 
On the other hand, multiple robots can split up in a coordinated fashion, so that at least one of them takes the cheaper path. To formalize this benefit, we define the average value per robot as $\valueperagent{M}(s)\triangleq V(s)/\numberagents$. When $\ammo{0}\geq 1$, for a single robot ($\numberagents = 1$), 
$\valueperagent{1}(\markovstate(0)) = 20$ with $p(0) = 1$, but for two robots, $\valueperagent{2}(\markovstate(0)) = 13$ with $p(0) = [1,1]^T$, 
and $\ammo{0}\geq 1$. This indicates that, depending on the graphs, the blue player can gain an advantage from having multiple robots 
which split up in a coordinated fashion. 
Another coordinated behavior seen in Figure~\ref{fig:graphMultiAgents} is that the robot which could traverse fewer edges to the goal instead waits at node $5$. 
If this leading robot instead went to the goal immediately, then the red player could switch the graph so that both robots must take the $16$ cost to the goal. Instead, waiting ensures that both robots traverse the final risky edges synchronously and exactly one of them achieves the cheaper cost.

\begin{figure}
\includegraphics[width=\linewidth]{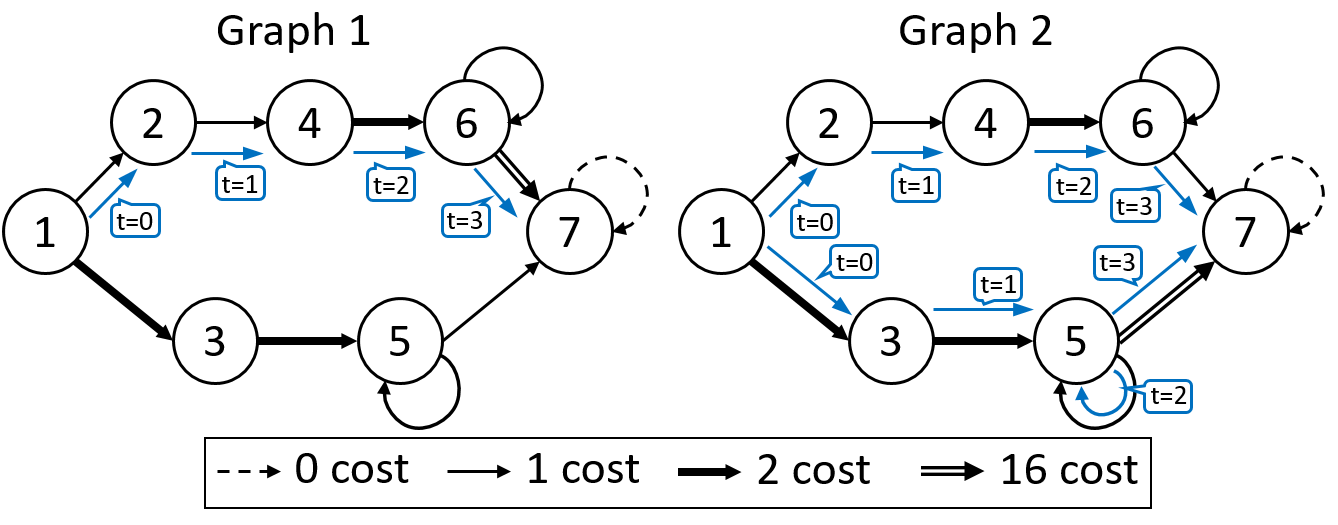}
\caption{Here, it is better for the blue player to have multiple robots which split up to reach the goal node $7$. The blue arrows on graph $1$ show the policy of a single robot, with the speech bubbles indicating the time of each action. The blue arrows on graph $2$ indicate the paths followed by two robots. For $M=1$, and $\ammo{0} = 1$, the red player will always change the graph 
so that the blue player will get a cost of $16$ for taking the $(6,7)$ edge, but for two robots exactly one robot will take the cheaper path to the goal, no matter the red player's action. } \label{fig:graphMultiAgents}
\vspace{-10pt}
\end{figure}

\paragraph{A More Practical Example}

Here, we discuss how these behaviors fit into a larger graph. Figure~\ref{fig:bigGameSolution} shows portions of an example trajectory under equilibrium policies for a larger graph of $\numbernodes=10$ nodes, with $\numberagents=4$ blue robots, and a maximum ammo count for the red player of $\maxammo = 5$, while Table~\ref{tab:bigGameSolutionDetails} contains more details. Both players employ mixed strategies, so this is only one of the possible outcomes. 
%
For example, at $t=1$, the blue player mixes so that its next possible positions are $[1,1,5,6]^T$, $[1,1,6,6]^T$, and $[1,2,6,6]^T$, so that some robots wait and some robots may split up, while the red player's policy is mixed between choosing the each of the three graphs. 
The value per robot is $\valueperagent{4}(\markovstate(0)) = 14.4667$, while it is higher for 
a single robot at $\valueperagent{1}(\markovstate(0)) = 15.2474$, again suggesting that having more robots is helpful for the blue player. Finally, this value is between the lower and upper bounds for the value, $\loweroutcome{\markovstate(0)} = 12$ and $\upperoutcome{\markovstate(0)} = 80$, as we expect.

\begin{figure*}
\includegraphics[width=\linewidth]{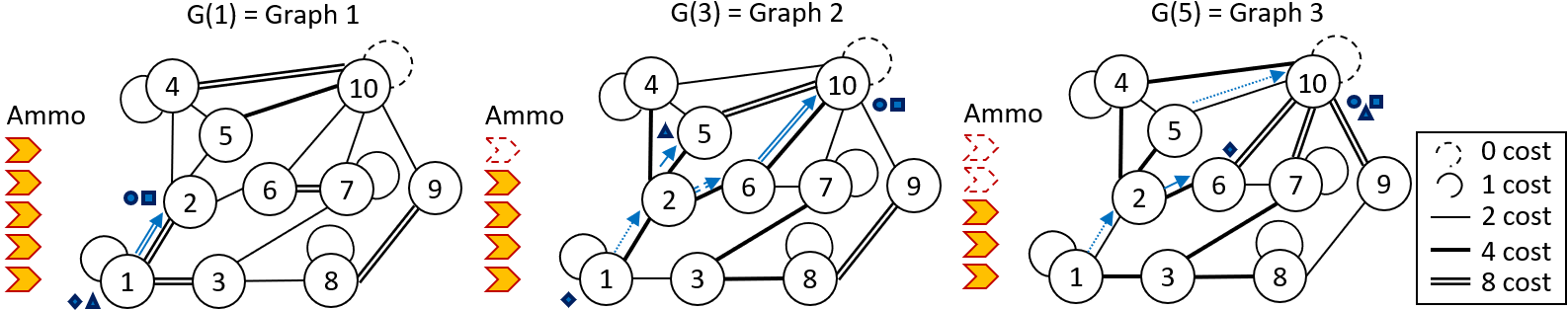}
\caption{This shows part of an example trajectory, under equilibrium strategies, for a game on a larger graph with the goal as node 10, with four blue robots. 
The game state is shown for $t \in \{1,3,5\}$. 
Each blue shape indicates the position of a blue robot, 
the filled chevrons indicate the remaining ammo, 
and the arrows indicate the blue robots' movement on previous time steps. Specifically, the double-lined arrows indicate two robots moving together, the solid lines indicate movement on the preceding time step, and the dotted lines indicate movement on the time step before that. 
See Table~\ref{tab:bigGameSolutionDetails} for the full trajectory. 
We also observe that the blue robots wait at node 1 and split up, so that not all take the expensive first edges from node 1 on graph 1. 
} \label{fig:bigGameSolution}
\end{figure*}

\begin{table}[htb]
    \centering
    \begin{tabular}{|c|c|c|c|c|}
        \hline
         $t$ & $p(t)$ & $\graph{}(t)$ & $\ammo{t}$  & $\sum_{t'=0}^{t-1}C(t')$ \\
        \hline
        0 & $[1,1,1,1]^T$ & $\graph{1}$ & $5$ & 0 \\
        \hline
        1 & $[2,2,1,1]^T$ & $\graph{1}$ & $5$ & 18 \\
        \hline
        2 & $[6,6,2,1]^T$ & $\graph{1}$ & $5$  & 31 \\
        \hline
        3 & $[10,10,5,1]^T$ & $\graph{2}$ & $4$ & 38 \\
        \hline
        4 & $[10,10,10,2]^T$ & $\graph{3}$ & $3$ & 50 \\
        \hline
        5 & $[10,10,10,6]^T$ & $\graph{3}$ & $3$ & 54 \\
        \hline
        6 & $[10,10,10,10]^T$ & $\graph{3}$ & $3$ & 62 \\
        \hline
     \end{tabular}
    \caption{
    Example NE trajectory 
    for Figure~\ref{fig:bigGameSolution} 
    }
    \label{tab:bigGameSolutionDetails}
    \vspace{-20pt}
\end{table}

\subsection{Statistical Results}

To statistically demonstrate some properties of this game and its solution, we constructed graphs using a modification of the canonical Erdős–Rényi model~\cite{erdHos1960evolution}. For a set of nodes $\nodeset$ of size $\numbernodes_\text{max}$, the probability that any pair of nodes $(i,j) \in \nodeset \times \nodeset$ 
is in the directed edge set of the graph is a chosen constant~$0.5$. 
Node $1$ and the goal node are the nodes with the longest (unweighted) distance between them, and any nodes from which the goal node cannot be reached are pruned, so the graph size is $\numbernodes \leq \numbernodes_\text{max}$. 
Finally, a self-loop is added to the goal node $\numbernodes$. 
We set $\numbergraphs = 3$, and randomly assign edge weights for each graph~$\graph{k}$ so that, for 
$(i,j) \in \edgeset$ and $i \neq j$, $(\weightmatrix{1}_{ij}, \weightmatrix{2}_{ij}, \weightmatrix{3}_{ij})$ is a permutation of the set $\{2,4,8\}$, chosen with uniform probability. 
The self-loop weights are all $1$ except for the goal node's. 
Using this method, $100$ graphs were generated for each initial graph size $\numbernodes_\text{max}$ between 4 nodes and 8 nodes. 

The common initial condition is 
$p(0)=\mathbf{1}_M$ and $\graph{}(0) = \graph{1}$, and the red player has ammo $\ammo{0}=6$. 
In Figure~\ref{fig:statsCostByNodes}, the values for this initial condition, for 5, 6, and 7 node graphs, have been normalized so that the lower bound~$\loweroutcome{s(0)}$ is $0$ and the upper bound~$\upperoutcome{s(0)}$ is $1$ in the plot. Trivial graphs where $\loweroutcome{s(0)} = \upperoutcome{s(0)}$ have been excluded from the results. 
Additionally, the approximate expected cost is shown for alternative blue strategies, which are the security strategy defined in~\eqref{eq:blueSecurityStratAction}, and a naive strategy where robots follow the shortest path on a ``best case'' graph, where all edges have the smallest cost. 
These have the same expected cost for any number of robots. 
The bounds are respected and are sometimes tight for all NE policies and the security strategy, while the naive strategy sometimes 
achieves much higher normalized costs for the blue player. 
Compared to the baseline of the security strategy, the blue NE policy has lower costs, and both outperform the naive blue strategy. 
Additionally, the mean \emph{normalized} value exhibits a downward trend as the number of nodes increases. 
More nodes may give the blue player more options to reach the goal, 
allowing it to gain an advantage through mixed policies. 
%
\begin{figure}
\includegraphics[width=\linewidth]{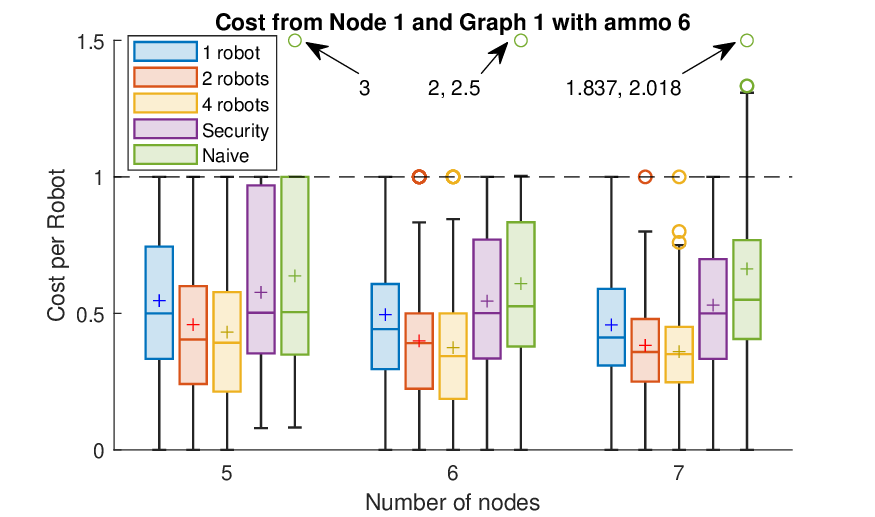}
\caption{This shows the value per robot $\valueperagent{M}(\markovstate(0))$ 
over 
random graphs, 
which has been normalized so that the lower bound is $0$ and the upper bound is $1$. 
For each number of blue robots, and for the blue security strategy and naive strategy, the value is plotted against the number of nodes in the random graphs. The plus sign indicates the mean, the center horizontal line indicates the median value, 
the middle $50\%$ of the values are inside the box, 
the whiskers indicate the furthest non-outlier values, and the circles indicate the outliers. 
Some outliers beyond the axis bounds have their values shown in annotations. 
The blue player's NE policy outperforms the baseline security strategy, which in turn outperforms the naive strategy, which cannot guarantee the upper bound on the cost. 
The values are always between the upper and lower bounds for the NE policies and the security policy, the mean decreases as the number of nodes increases, 
and the mean decreases with the number of robots.
} \label{fig:statsCostByNodes}
\end{figure}
%
%
%
Figure~\ref{fig:statsCostByNodes} also plots the values per robot $\valueperagent{M}(\markovstate(0))$ for the random graphs for different numbers of robots. 
The mean value per robot decreases as number of robots increases, for each graph size, demonstrating that having more robots is an advantage for the blue player under NE policies. 

Figure~\ref{fig:statsAmmo} shows the effect of the red player's ammo count on the normalized game value. 
The average of the normalized values is taken over all the random graphs and plotted against the ammo count. 
The value is nondecreasing with respect to the ammo count, and the value plateaus around 4 ammo, below the upper bound for the value, suggesting that having more ammo stops benefiting the red player after a point. 

\begin{figure}
\includegraphics[width=\linewidth]{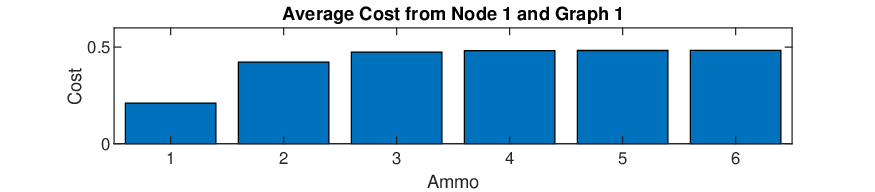}
\caption{This shows the average game value, over about $500$ random graphs as a function of ammo, normalized so that the lower bound is $0$ and the upper bound is $1$. 
The value is nondecreasing in ammo and it asymptotically approaches a value which is significantly lower than the upper bound. 
} \label{fig:statsAmmo}
\end{figure}

\section{Conclusions}

To handle uncertainty for robots moving through hazardous, time-varying environments in the presence of an adversary, we have formulated a novel zero-sum game where the blue player's robots move through a graph with time-varying costs to reach a goal, while the red player controls those costs. This can be appropriate for military or security applications, or to model the worst-case natural changes. 
Due to the complexity of the game, we provided security strategies for both players and a theoretical guarantee that the blue robots reach the goal. We provided methods to simplify computation, and demonstrated the game solution through numerical methods. Both players benefit from mixed strategies, while the blue player benefits from having multiple robots which use coordinated strategies, such as splitting up and waiting to traverse risky edges synchronously, ensuring at least some take the cheaper paths. 
While this work made the strong assumption of perfect \emph{state} information for the blue player, future work can include formulating and solving games of incomplete information, where the blue player is uncertain of the red player's capabilities, and imperfect information, where the blue robots are only aware of changes in the graph close to them, to model the more realistic case of limited perception for the robots.

{\scriptsize

    \bibliographystyle{ieeetr}
    \bibliography{references}

}

\end{document}